\newcommand{\ol}[1]{\texttt{\small #1}}
\newcommand{\inner}[2]{\left\langle #1,#2 \right\rangle}
\newcommand{\xy}[0]{(\vec{x},y)}
\newcommand{\w}[0]{\vec{w}}
\newcommand{\argmin}[1]{\underset{#1}{\operatorname{argmin}}}
\begin{document}

\title{Iterative MapReduce for Large Scale Machine Learning}

\numberofauthors{3}
\author{
\alignauthor{\text{Joshua Rosen, Neoklis Polyzotis}} \\
\affaddr{University of California, Santa Cruz}
\alignauthor{\text{Vinayak Borkar, Yingyi Bu, Michael J. Carey}} \\
\affaddr{University of California, Irvine}
\and 
\alignauthor{\text{Markus Weimer, Tyson Condie, Raghu Ramakrishnan}} \\
\affaddr{Yahoo! Research} 
}

\maketitle

\begin{abstract}
  Large datasets (``Big Data'') are becoming ubiquitous because the
  potential value in deriving insights from data, across a wide range
  of business and scientific applications, is increasingly recognized.
  The data growth has been accompanied by rapid adoption of large,
  elastic, multi-tenanted computing clusters (``compute clouds''),
  leading to a virtuous cycle: the scalability of cloud computing
  makes it possible to analyze ever larger datasets, and the
  proliferation of Big Data leads to further adoption of cloud
  computing.  In particular, machine learning---one of the
  foundational disciplines for data analysis, summarization and
  inference---on Big Data has become routine at most organizations
  that operate large clouds, usually based on systems such as Hadoop
  that support the MapReduce programming paradigm.  It is now widely
  recognized that while MapReduce is highly scalable, it suffers from
  a critical weakness for machine learning: it does not support
  iteration.  Consequently, one has to program around this limitation,
  leading to fragile, inefficient code.  Further, reliance on the
  programmer is inherently flawed in a multi-tenanted cloud
  environment, since the programmer does not have visibility into the
  state of the system when his or her program executes. Prior work has
  sought to address this problem by either developing specialized
  systems aimed at stylized applications, or by augmenting MapReduce
  with ad hoc support for saving state across iterations (driven by an
  external loop).  In this paper, we advocate support for looping as a
  first-class construct, and propose an extension of the MapReduce
  programming paradigm called {\em Iterative MapReduce}.  We then
  develop an optimizer for a class of Iterative MapReduce programs
  that cover most machine learning techniques, provide theoretical
  justifications for the key optimization steps, and empirically
  demonstrate that system-optimized programs for significant machine
  learning tasks are competitive with state-of-the-art specialized
  solutions.
\end{abstract}
% A category with the (minimum) three required fields
%\category{H.4}{Information Systems Applications}{Miscellaneous}
\terms{Systems, Machine Learning}
\section{Introduction}\label{sec:intro}% Data deluge
The volume of data is skyrocketing as organizations recognize the potential
value of data-driven approaches to optimizing every aspect of their operation,
and scientific disciplines ranging from astronomy to zoology become
increasingly data-centric in everything from hypothesis formulation to theory
validation.  Large scale analytics are a key to deriving insight from this
deluge of data, and {\em machine learning} (ML) is now established as a
foundational discipline that is ever more valuable as datasets grow
larger~\cite{allreduce}.  For example, by analyzing billions of transactions,
credit-card companies are able to quickly identify stolen credit card;
insurance companies derive can flag claims for possible fraud.  Supermarkets
derive promotions based on consumer purchases.

% In the 2011 holiday season, retailers tracked the mall shopping
% patterns of mobile-phone users.~\cite{cnn}

% MapReduce is the right computational model for the core, but fails
% to support iteration

The sheer size of today's data sets far exceeds the capacity of a
single machine.  Big Data analytics platforms based on the MapReduce
paradigm, such as Hadoop, have enabled statistical queries over large
data, and many ML algorithms can be cast in terms of these queries
\cite{kearns93t,Chu:2006fk}.  However, MapReduce fails to recognize
the iterative nature of most ML algorithms, and due to this
unfortunate omission, while ML computations can be expressed using
MapReduce, execution overheads are significantly higher than in
Message Passing Interface or algorithm-specific implementations
(e.g.~\cite{Ye:2009zr, Weimer:2010fk}).

% This situation has lead to a plethora of adaptations of the
% MapReduce model for this use-case.  However, none of those actually
% treat iteration as a first class citizen.  This omission is
% puzzling, considering the heritage of MapReduce~\cite{Dean:2008fk},
% which was designed to isolate the programmer from the scheduling of
% their programs by virtue of promoting declarative programming.

Failing to recognize iteration as a first class programming
abstraction is a step backwards, as it forces the programmer to make
systems-level decisions.  For example, in Spark~\cite{Zaharia:2010uq}
the programmer has to decide what data to cache in distributed main
memory.  This approach is ill-suited for large, multi-tenant clusters
such as public clouds where important performance-related parameters
change constantly and in a way that is hard for a programmer to track.
In addressing this challenge, we draw our inspiration from database
systems, where the level of abstraction introduced by the relational
model freed users from low-level systems considerations, and opened
the door to DBMS-driven optimization.

In this paper, we present extend the MapReduce paradigm with support
for iteration, and present a principled framework for optimizing the
runtime of systems such as Hadoop to efficiently support Iterative
MapReduce programs.  To this end, we make the following contributions:

\begin{enumerate}
\item {\bf Iterative MapReduce: } We formalize the Iterative MapReduce
  programming model, and describe how many recent proposals to support
  ML over Big Data can be expressed readily in this
  model. (Section~\ref{sec:iterative-map-reduce})
\item {\bf Runtime: } We present a new runtime for Iterative
  MapReduce. (Section~\ref{sec:physical-plan})
\item {\bf Optimizer:} We develop an optimizer that picks a good
  runtime plan when given data, program and cluster parameters.  In
  particular, we consider two key choices: the partitioning strategy
  for the training data, and the structure of the aggregation that is
  applied to the intermediate statistics produced by the
  computation. We argue that these are the only tunable knobs since
  the computation itself (the logic of the Map and Reduce steps) is
  opaque, and present a theoretical foundation for our
  optimizer. (Section~\ref{sec:optimizer})
\item {\bf Empirical study: } We empirically validate both our
  optimizer and our runtime, the latter by demonstrating that it can
  outperform a state of the art system,
  VW~\cite{allreduce}. (Section~\ref{sec:experiments})
\end{enumerate}

%%% Local Variables: 
%%% mode: latex
%%% TeX-master: "../paper"
%%% End: 

\section{Iterative MapReduce}\label{sec:iterative-map-reduce}\subsection{Background: MapReduce}

MapReduce is a functional programming model that splits the
traditional group-by-aggregation computation into two steps: \ol{map}
and \ol{reduce} \cite{Dean:2004uq}.  The (user-specified, opaque)
\ol{map} step is responsible for transforming the input into {\em
  key-value} record pairs.  The {\em key} identifies the group to
which the {\em value} belongs; all values associated with the same key
are grouped together.  The (also user-specified and opaque)
\ol{reduce} step is then used to process each group and produce the
final result.  The computation associated with the \ol{reduce} step is
commonly an aggregate function (e.g., sum, max, mean, etc.), which
produces a scalar value for each group.

The MapReduce programming model has been used to implement many
higher-level programming abstractions.  Pig Latin~\cite{Olston:2008kx}
and Hive~\cite{hive} both provide a SQL layer with some notable
extensions (e.g., correlated sub-queries) on top of the Hadoop
MapReduce runtime.  Such higher-level abstractions allow programmers
to express their computations in a form that is closer to an intended
target domain (e.g., data analytics).  In our work, we have built a
higher-level abstraction for machine learning using MapReduce called
\emph{ScalOps}~\cite{Weimer:2011fk}, which is a Scala domain-specific
language (DSL) that uses Pig Latin like syntax.

\subsection{Iterative MapReduce}
Many machine learning algorithms can be expressed as iterative
procedures refining the model, given training data. More to the point,
the body of these iterations can be expressed solely in terms of
\emph{statistical queries}~\cite{kearns93t} over the training data
such as min, max, mean and sums; these queries can be naturally
computed in MapReduce.  This insight was used by Chu~et
al.~\cite{Chu:2006fk} to express effective parallel versions of
several machine learning algorithms (e.g., backpropagation in neural
networks, EM, logistic regression, linear SVMs, PCA) relying only on
sums over functions applied to the data.

Inspired by these earlier results and building on our own work towards
a more general programming interface for cloud-based Big Data analysis
\cite{Weimer:2010fk,Weimer:2011fk}, we introduce an extension of the
MapReduce programming paradigm, called \emph{Iterative MapReduce},
that supports iteration as a fundamental construct.  Iterative
MapReduce is defined in terms of a collection of operators that can be
composed to create dataflow programs.  Each Operator accepts an input
and produces an output.  Chaining operators therefore is the main
composition method in Iterative MapReduce.  The computation itself is
expressed in these three key operators:

\begin{description}
\item[\ol{MapReduce}:] This operator has two inputs: the data set and
  side information that it makes available to the user defined
  \ol{map} and \ol{reduce} functions it hosts. The \ol{map} function
  is applied to all records in the immutable input data and the
  \ol{reduce} function is applied to aggregate the outputs of that
  process. We define \ol{reduce} in the sense typically found in
  functional programming languages: It is a associative and cumulative
  function that accepts two inputs and reduces them to a single
  output.  Section~\ref{sec:physical-plan} looks at how we can
  parallelize this step over a cluster of machines.
\item [\ol{Sequential}:] This operator accepts a single input, and
  produces a single output using the user defined function it hosts.
  Separating such code from the \ol{MapReduce} operator allows us to
  ensure an associative and commutative \ol{reduce} function.
\item[\ol{Loop}:] This is a fundamental extension to the basic
  MapReduce paradigm.  As in most programming languages, our \ol{Loop}
  operator accepts three inputs: a body, a condition and an
  initializer.  The \emph{body} contains a chain of \ol{MapReduce} and
  \ol{Sequential} operators. The output of one forms the input of the
  next operator in this chain.  We require that the output of the last
  operator is valid input to both the loop condition (see below) and
  the first operator of the chain.  The \emph{condition} accepts the
  loop body's output as input and returns a boolean value indicating
  whether the loop should terminate, while the \emph{initializer} is
  used to provide an initial input for the loop body.
\end{description}

% Expressive power
Many programs can be expressed using these three operators.
Trivially, they facilitate the construction of loops over sequential
code.  More importantly, they allow us to write iterative ML
algorithms without recourse to external mechanisms (in particular,
without using a top-level driver that invokes MapReduce within a loop,
but is not visible to the MapReduce system).  To express most
iterative ML algorithms, the loop body would consist of a single
\ol{MapReduce} operator that computes the relevant statistics, using
the current model state as an input.  This would be followed by a
\ol{Sequential} operator that updates the model.

While we discuss this special case extensively due to its importance
in the machine learning domain, we note that the Iterative MapReduce
programming model is in fact more general, and supports loops over
multiple \ol{MapReduce} operators as well as loops over any sequence
of \ol{MapReduce} and \ol{Sequential} operators.  This, for example,
allows facilitates the native expression of optimization algorithms
that probe multiple possible gradient step sizes.

\begin{figure}
  \begin{center}
    \includegraphics[scale=0.45]{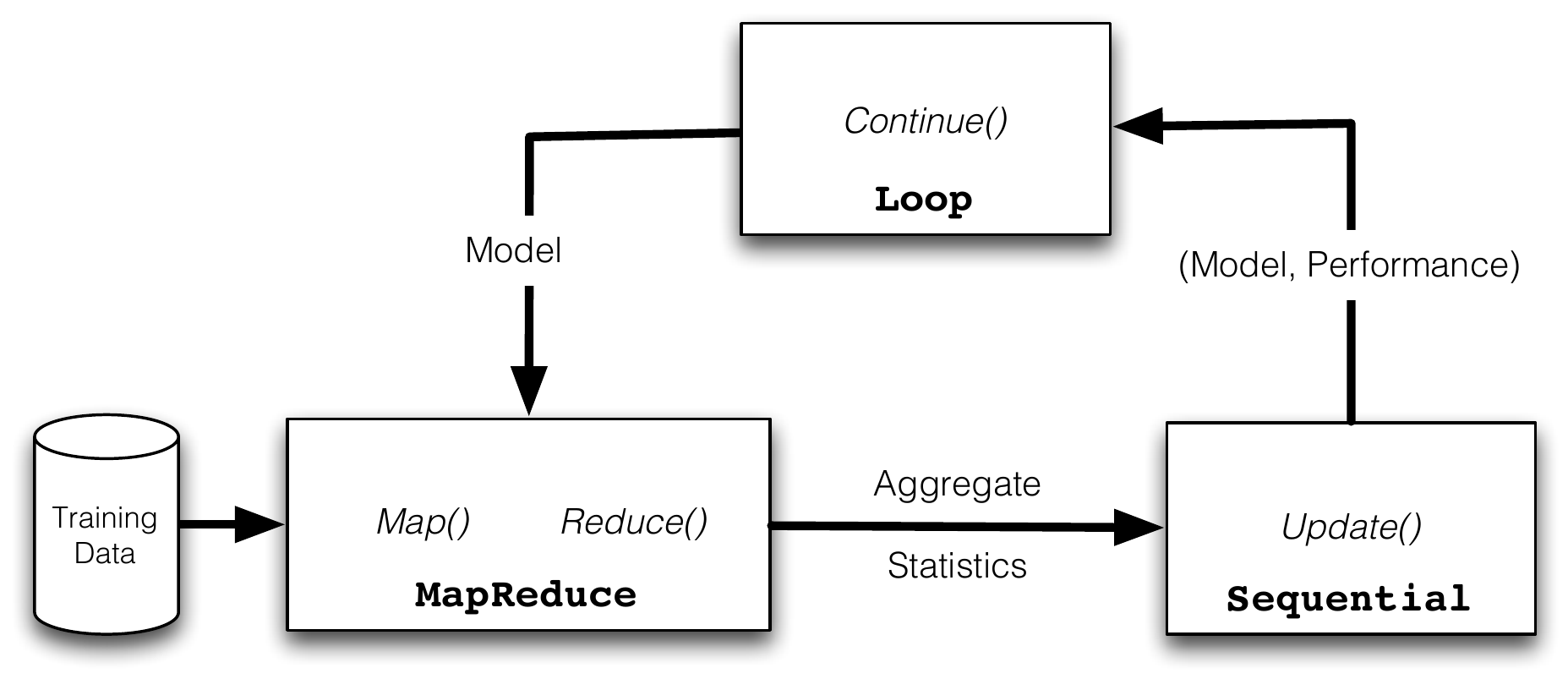}
    \caption{\label{fig:imr-dataflow}Iterative MapReduce Dataflow for ML.}
  \end{center}
\end{figure}

The situation for the important special case is depicted in
Figure~\ref{fig:imr-dataflow} from a data-flow perspective.  The
arrows indicate control flow, which carry data from one step to the
next.  The {\bf Loop} operator drives each iteration, until some
stopping condition is met.  It is also responsible for producing the
initial model.  In a single iteration, the {\bf MapReduce} operator
accepts the current model and uses it to process the training data,
and produce an aggregate statistic.  The {\bf Sequential} step uses
the aggregate statistic to update the model, before returning control
to the {\bf Loop} for a (possible) subsequent iteration.

% Why Iterative MapReduce
While we are not the first to recognize the need for supporting iteration in
MapReduce, we are the first to explore the consequences of adding iteration as
a fundamental construct in the MapReduce system, and in particular to
demonstrate the opportunities for system-driven program optimization.  Prior
work is focused on assembling ML algorithms within a specialized
runtime~\cite{allreduce,Malewicz:2009ve,Zaharia:2010uq} targeted at specific
applications, or invoking a general purpose MapReduce engine~\cite{Chu:2006fk}.
%RR neither of which formalize the \ol{Loop} or \ol{Sequential} operators.
In contrast, we have developed the Iterative MapReduce programming model with
ML-style programs in mind (the \ol{Loop} operator is especially noteworthy),
and (in Section~\ref{sec:optimizer}) develop an optimizer that can translate a
broad class of programs in this model (covering most ML programs) to an
efficient runtime execution plan for an arbitrary cluster environment.  The
systems-driven optimization enabled by our approach is especially valuable in
multi-tenanted and elastic cloud systems, whose rapidly changing resource
availability makes it difficult if not impossible for programmers to manually
configure their programs effectively.

%%% Local Variables: 
%%% mode: latex
%%% TeX-master: "../paper"
%%% End: 

\section{Related Work}\label{sec:related-work}
In translating programs from our programming model to efficient runtime plans,
we seek to exploit optimizations discovered in prior work, which we review in
this section.

{\bf Hadoop~\cite{hadoop}} is the dominant Open Source Software implementation
that supports the MapReduce programming model~\cite{Dean:2004uq}.  A Hadoop job
executes a single MapReduce iteration.  The input and output of the job is
stored in a distributed filesystem (HDFS).  A job consists of a \ol{map} and
\ol{reduce} step, which are parallelized over many tasks.  Hadoop tries to
schedule map tasks on machines that host the input data, so the number of map
tasks is data dependent.  The number of reduce tasks is a job parameter, set by
the programmer.  The intermediate data produced by the map tasks and consumed
by the reduce tasks is managed by the Hadoop runtime, which uses a sort-based
implementation to perform the group-by operation.  The Hadoop API also exposes
a ``combiner'' function that supports pre-aggregation of this intermediate
data.  Hadoop does not have support for a \ol{loop} step.  Instead, an external
driver must implement such a step by repeatedly submitting jobs to the Hadoop
runtime.  Each job executes in isolation and any information produced by the
previous job is fed to the new job through back channels (i.e., the HDFS file
system).  Lastly, the training data must be re-read from its source (i.e.,
HDFS), forgoing the benefits of caching.

{\bf HaLoop~\cite{Yingyi-Bu:2010fk}} exposes an application programming
interface that supports iterations in Hadoop MapReduce.  The extension adds a
loop control module to the Hadoop master node that repeatedly spawns new jobs
based on a loop body, until come stopping condition is met.  HaLoop also adds
cache aware scheduler to Hadoop that colocates map tasks with the reduce task
that produces its input.

{\bf MPI Launchers} (e.g., \cite{Ye:2009zr}) address the need for an explicit
\ol{loop} step, and by doing so, avoid the scheduling overheads observed in
Hadoop.  Pregel~\cite{Malewicz:2009ve} and Giraph~\cite{giraph} are two recent
runtimes that support a message passing interface (MPI) programming model.
Both systems expose an API for loading and caching input data.  The \ol{map}
step is automatically fed the output of the prior iteration, usually in the
form of messages.  The \ol{reduce} step is supported by global ``aggregators.''

{\bf Worker-Aggregator~\cite{Weimer:2010fk}} defined by Weimer et al., is a
\emph{distributed main memory} implementation that uses a \emph{flat
aggregation hierarchy} with a single aggregator task with direct network
connections.  The system outperforms Hadoop by an order of magnitude on a
stochastic gradient descent~(SGD) algorithm.  This speedup is in line with
earlier MPI results~\cite{Ye:2009zr}.  The authors point to a rather unorthodox
handling of failures: As the algorithm evaluated (SGD) is inherently stochastic
in its data access, machine failures can simply be \emph{ignored}, as long as
they occur independently of the data stored on those machines.

{\bf Vowpal Wabbit (VW)~\cite{allreduce}} is a scalable machine learning system
that integrates the machine learning algorithm(s) into the runtime.  The system
includes a Hadoop-aware version of the allreduce function found in MPI.  The
system is highly optimized for fast iterations.  A cache aware data format is
used to speed up the \ol{map} step, and a \emph{binary aggregation tree} is a
key optimization used to speed up the \ol{reduce} step.  Task re-scheduling is
avoided between iterations and communication happens via direct network
connections.  The authors observe an order of magnitude speedup when comparing
with stock Hadoop.

{\bf Spark~\cite{Zaharia:2010uq}} is a runtime built on a data abstraction
called resilient distributed datasets (RDDs) that reference immutable data
collections.  Spark also provides a domain-specific language (DSL) that
consists of standard relational algebra {\em transformations} (select, project,
join) and {\em actions} that perform global aggregation.  Spark supports
iterative algorithms that explicitly cache RDDs in-memory.  Indeed, the Spark
runtime is optimized for in-memory computation only.  Spark has published
speed-ups of $30\times$ over stock Hadoop.

\subsection{Discussion}
Many of the approaches described above claim order of magnitude
speedups over stock Hadoop when performing Iterative MapReduce.  These
runtimes share several characteristics in order to accomplish this
goal.  They avoid rescheduling of machines between iterations, cache
partitioned data between iterations, and use more powerful forms of
aggregation between map and reduce steps.  However, these improvements
have not been cast in a form that can be exploited on an arbitrary
cluster environment.  To do so requires us to capture all significant
aspects of the computation including iteration in the programming model;
develop a formalization of the plan
space, including a definition of runtime operations and key parameters
such as {\em partition width} and {\em aggregation tree fan-in}, in order
to reason about alternative equivalent execution plans; and to build an
optimizer that can evaluate the cost of these alternative plans and choose
a good plan.  We have already introduced the Iterative MapReduce programming
model, which captures iteration; next, we will build on this to formalize the space
of equivalent runtime plans for a given program.  After that, we describe the
optimizer in Section \ref{sec:optimizer}.

\section{Physical Plan}\label{sec:physical-plan}This section presents the physical plans that execute our Iterative
MapReduce programming model on a cluster of machines.  For
concreteness, we consider the Iterative MapReduce dataflow shown in
Figure \ref{fig:imr-dataflow} and discuss a {\em plan template} for
it: the space of equivalent plans is realized by instantiating this
template with different plan parameter values.  Our implementation
uses the Hyracks runtime~\cite{Borkar:2011ly}, and a plan consists of
dataflow processing elements, or Hyracks operators, that execute in
the Hyracks runtime.  Hyracks splits each Hyracks operator into
multiple tasks that execute in parallel on a distributed set of
machines.  Similar to Hadoop, each task operates on a single partition
of the input data.  In Section~\ref{sec:imr-physical-plan}, we
describe the structure of the physical plan template and discuss its
tunable parameters.  Section~\ref{sec:plan-space} then explores the
space of choices that can be made when executing this physical plan on
an arbitrary cluster with given resources and input data.

\subsection{Iterative MapReduce Physical Plan}\label{sec:imr-physical-plan}

\begin{figure}
  \begin{center}
   \includegraphics[width=\columnwidth]{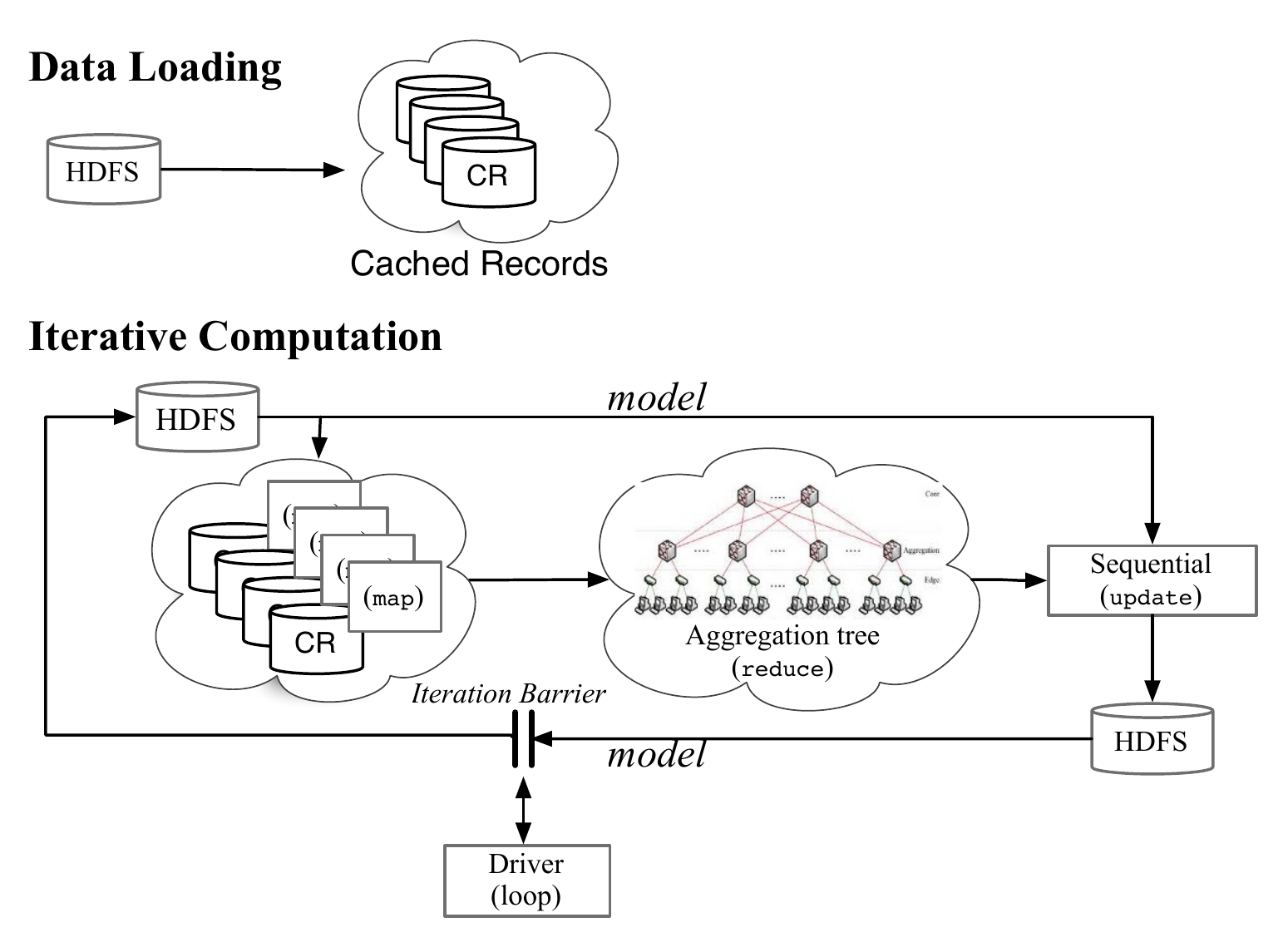}
   \caption{\label{fig:physical-plan}Hyracks physical plan for Iterative MapReduce.}
  \end{center}
\end{figure}

Figure~\ref{fig:physical-plan} depicts the physical plan template for
the Iterative MapReduce dataflow in Figure \ref{fig:imr-dataflow} as
two data-flows.  The top dataflow loads the input data from HDFS,
parses it into an internal representation (e.g., binary formated
features), and partitions it over~$N$ cluster machines.  The bottom
dataflow executes the computation associated with a {\bf Loop}
operator.  The \emph{Driver} of the loop (observe that this is now
controlled by the system, which is now aware of the entire program
including the iteration!) is responsible for seeding the initial
global model and driving each iteration based on the loop condition.
The \ol{map} step is parallelized across some number of nodes in the
cluster determined by the optimizer~\footnote{Reflected in the
  partitioning strategy chosen for the data loading step}.  Each
\ol{map} task sends a data statistic (e.g., a loss and gradient in a
BGD computation) to a random reduce task participating in the
leaf-level of an aggregation tree.  This aggregation tree is balanced
with a parameterized fan-in~$f$ (e.g., $f = 2$ yields a binary tree)
determined by the optimizer.  The final aggregate statistic is passed
to the {\bf Sequential} operator, which updates the global model
stored in HDFS.  The {\em Driver} detects this update, and applies the
loop condition to the new model to determine if another iteration
should be performed.

This description of the plan template highlights two choices to be
determined by the optimizer---the number of nodes allocated for the
map phase of the computation, and the fan-in of the aggregation tree
for the reduce phase.  The structure of the plan template comes from
consideration of the structure of the dataflow in Figure
\ref{fig:imr-dataflow}, and the justification for the focus on these
two optimizer choices will be presented next.

% TODO: talk about transmission of updated model.

\subsection{The Plan Space}\label{sec:plan-space}

There are several considerations that must be taken into account when
mapping the physical plan in Figure~\ref{fig:physical-plan} to an
actual cluster of machines.  Many of these considerations are
well-established techniques for executing data-parallel operators on a
cluster of machines, and are largely independent of the resources
available and the program/dataset to be optimized.  We begin by
discussing these ``universal'' optimizations for arriving at an
execution plan.  Next, we examine those choices that are dependent on
the cluster configuration (i.e., amount of resources) and computation
parameters (i.e., input data and aggregate value sizes).  These are
the choices an optimizer must make for a given program and input
dataset in the context of a given cluster and current workload.

\subsubsection{Universal Optimizations}
\label{sec:univopt}

\textbf{Data-local scheduling} is generally considered an optimal
choice for executing a dataflow of operators in a cluster environment:
a map task is therefore scheduled on the machine that hosts its input
data.  \textbf{Loop-aware scheduling} ensures that the task state is
preserved across iterations.  Note that this is not the same as
blocking machines, as is done in VW~\cite{allreduce}.  Rather, we want
to avoid costly re-optimization per-iteration, taking advantage of the
similarity between iterations.  \textbf{Caching} of immutable data can
offer significant speed-ups between iterations.  However, careful
consideration is required when the available resources do not allow
for such caching.  For example, it is assumed in~\cite{Zaharia:2010uq}
that sufficient main memory is always available to cache the data to
be saved across iterations, and performance degrades rapidly when this
assumption does not hold.  \textbf{Efficient data serialization} can
offer significant performance improvements.  We use a binary formated
file, which has substantial benefits in terms of space and time over
simple Java objects, to store our cached records.

\subsubsection{Per-Program Optimizer Decisions}

The optimizations discussed in Section \ref{sec:univopt} apply equally
to all jobs and can be considered best practices inspired by the
best-performing systems in the literature.  This leaves us with two
optimization decisions that are dependent on the cluster and
computation parameters; we discuss them below.  In the next section,
we develop a theoretical foundation for an optimizer that can make
these choices effectively.

\textbf{Data partitioning} determines the number of \ol{map} tasks in
an Iterative MapReduce physical plan.  For a given job and a maximum
number~$N_{max}$ of machines available to it, the optimizer needs to
decide which number$N<=N_{max}$ of machines to request for the job.
The decision is not trivial, even ignoring the multi-job nature of
today's clusters: More machines reduce the time in the map phase but
increase the cost of the reduce phase, since more objects need to be
aggregated.  The goal of data partitioning is to find the right
trade-off between \ol{map} and \ol{reduce} costs.

\textbf{Aggregation tree structure} involves finding the optimal
fan-in of a single \ol{reduce} node in a balanced aggregation tree.
Aggregation trees are commonly used to parallelize the \ol{reduce}
function.  For example, Hadoop uses a \ol{combiner} interface to
perform a single level aggregation tree, and Vowpal Wabbit uses a
binary aggregation tree.  In this next section, we develop an
optimizer to decide an optimal tree structure for a given job based on
the fan-in~$f$ of the aggregation nodes in the tree.

\section{Runtime Optimization}\label{sec:optimizer}\newtheorem{theorem}{Theorem}
\newtheorem{corollary}{Corollary}

After factoring out optimizations that are universal in nature, the
optimizer needs to answer two crucial questions for a given job in a
given shared cluster environment: (a) How many machines should we
devote to the task? (b) What fan-in~$f$ should we use for the
aggregation tree phase?  In answering these questions an optimizer can
consider two different objectives: (a) Minimize the response time
(wall-clock time) for the program.  (b) Minimize the cost of the
job. Here, we consider \emph{machine time} as a proxy for cost. While
many other metrics are conceivable in principle, public clouds such as
Amazon EC2 have opted for machine time, which makes it the prime
candidate for minimization.

Below, we present our theoretical findings for these questions for two
cases. First, we show that the optimal fan-in of the aggregation tree
is independent of both the cluster and the job.  We use this result to
design the optimal partitioning for two cases: (a) The per-record
processing time is independent of the number of machines used; this is
the case for systems where either all records are read from disk
(e.g., Hadoop) or all records are held in distributed main memory
(e.g., Spark).  (b) Caching influences the time to access/process a
record, which is at the heart of Iterative MapReduce optimization.

As before, we consider the following simple program expressible in our
programming model: A \ol{Loop} containing a single \ol{MapReduce}
operator followed by a \ol{Sequential} operator.  The time spent in
the \ol{Sequential} operator and the iteration control are small
relative to the time spent on MapReduce, hence the optimizer needs
only to consider the time spent inside of MapReduce operator.

We assume that both our network and computation behave linearly: If we
invoke a UDF twice as often, we assume that it will take twice as
long.  We assume that data transmission to/from a machine behaves
linearly.  When a machine sends or receives data it does so
sequentially.  Both of these assumptions can be violated in real world
clusters under extreme load. However, they represent the behavior
within the optimal load region of the cluster.

These assumptions allow us to use the notation found in
Table~\ref{tab:symbols} to express our model for both the iteration
time and cost.  $M$, $P$ and $D$ can be measured for a given cluster
and job and $R$ is known for a job.

\begin{table}
  \centering
  \begin{tabular}{cl}
    \textbf{Symbol} & \textbf{Meaning} \\
    \hline
    $R$      & total \# records \\
    $N_{max}$ & Max \# CPUs \\
    $M$ & \# records cached per CPU \\
    $P$ & Map time per record \\
    $D$ & Load time per record \\
    $A$ & Aggregation time per object \\
  \end{tabular}
  \caption{Symbols used in the derivations}
  \label{tab:symbols}
\end{table}

Lastly, we assume both the cost and the computational time of the
\ol{MapReduce} operator to be comprised additively of the cost (time)
of the map phase and the cost (time) of the reduce phase. Hence, we
state:
\begin{eqnarray*}
  T(N,f) &=& T_A(N,f) + T_M(N)\\
  C(N,f) &=& C_A(N,f) + C_M(N)
\end{eqnarray*}
As already stated in the equation, we assume the aggregation
time~$T_A$ and cost~$C_A$ to depend on both the fan-in~$f$ and the
number~$N$ of machines used. The time~$T_M$ and cost~$C_M$ to map, on
the other hand, solely depend on the number of machines used.
Intuitively, more machines introduce greater parallelism but at the
same time incur additional aggregation time and cost.

In the remainder of this section, we present theoretically optimal
choices for the fan-in~$f$ and the number of machines~$N$ to be used,
starting with the fan-in.

%RR stopping here for now in the interests of time

\subsection{Optimal Aggregation Tree Fan-In}

\begin{theorem}
  The fan-in of the \emph{fastest} aggregation tree is:
  \begin{displaymath}
    \hat{f} = e
  \end{displaymath}
\end{theorem}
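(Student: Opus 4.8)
The plan is to write the aggregation time $T_A(N,f)$ as an explicit function of the fan-in and then minimize it by elementary calculus, observing that the minimizer is free of all job- and cluster-dependent constants. First I would pin down the shape of the tree: the map phase emits one partial statistic per map task, so a balanced $f$-ary aggregation tree over these $N$ statistics has depth $\log_f N = \ln N/\ln f$. Every aggregator, whether at the leaf level or higher, folds in exactly $f$ objects.

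Second, I would account for the work on the critical path. By the linearity assumptions stated above — in particular that a machine sends and receives sequentially — an aggregator that combines $f$ children spends time proportional to $f$ (the constant of proportionality absorbing the per-object aggregation time $A$ together with the per-object transfer cost, and being independent of $f$). Chaining one such node per level along the critical path gives
\[
  T_A(N,f)\;\propto\; f\,\log_f N \;=\; (\ln N)\,\frac{f}{\ln f},
\]
where the suppressed constant depends on the job and the cluster but not on $f$.

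Third, I would minimize. Since the constant and $\ln N$ do not involve $f$, minimizing $T_A$ over $f>1$ is the same as minimizing $g(f)=f/\ln f$. Its derivative is $g'(f)=(\ln f-1)/(\ln f)^2$, which is negative for $1<f<e$, vanishes precisely when $\ln f=1$, i.e.\ $f=e$, and is positive for $f>e$; hence $g$, and therefore $T_A$, attains its unique minimum at $\hat f=e$. The minimizer contains none of $A$, $D$, $N$, or any resource parameter, which is exactly the claimed independence of the optimal fan-in from the cluster and the job.

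The step I expect to require the most care is the second one: justifying that the cost of the tree is faithfully modeled as (depth)\,$\times$\,(per-node cost) and that the per-node cost is genuinely linear in $f$ rather than, say, partly overlapped with computation or pipelined across levels. I would argue that in the optimal-load region the serial receive of a node's $f$ inputs is the binding term, so the ``$\propto f$ per level'' model is tight up to lower-order additive terms (e.g.\ a fixed per-level latency), and any such refinement leaves $f/\ln f$ as the only $f$-dependent factor, hence does not move the optimum. I would also note in passing that since $e\approx 2.718$ is not an integer and $3/\ln 3 < 2/\ln 2$, the best \emph{realizable} fan-in is $f=3$.
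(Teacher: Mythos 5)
Your proposal is correct and follows essentially the same route as the paper: model the tree's time as (per-node cost $Af$) times (depth $\ln N/\ln f$), then minimize $f/\ln f$ by calculus to obtain $\hat f = e$. Your added remarks on the integer-realizable fan-in ($f=3$) and on unmodeled per-node setup costs are sound and in fact anticipate the paper's own empirical finding that the best measured fan-in is slightly above $e$.
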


\begin{proof}
  The time it takes to aggregate~$N$ inputs in an aggregation tree of
  fan-in $f$ can be phrased as:
  \begin{displaymath}
    T_A(N, f) = A  f  h(N,f) 
  \end{displaymath} 
  where $h(N,f)$ is the number of levels in the tree. Aggregation
  happens in parallel at each level.  Hence, the time per level is the
  time spent in a single aggregation node, $A f$.  The height of a
  tree with $N$ leaf nodes and arity $f$ is
  $h(N,f)=\log_f{n}=\frac{\ln{N}}{\ln{f}}$.  Hence, we arrive at:
  \begin{displaymath}
    \hat{f} = \argmin{f}\left(A\ln(N) \frac{f}{\ln f}\right) =e 
  \end{displaymath}
\end{proof}

\begin{corollary}
  The minimal time process $N$ inputs in a balanced aggregation tree
  is:
  \begin{displaymath}
    \hat{T}_A(N) = A e \ln(N)
  \end{displaymath}
\end{corollary}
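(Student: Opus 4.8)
The plan is to obtain the result by a direct substitution into the cost expression already derived in the proof of the preceding theorem. Recall that that proof established $T_A(N,f) = A\, f\, h(N,f)$ where the tree height is $h(N,f) = \log_f N = \frac{\ln N}{\ln f}$, so that $T_A(N,f) = A \ln(N)\,\frac{f}{\ln f}$, and that the fan-in minimizing this quantity is $\hat f = e$. The corollary is then just the value of $T_A$ at this optimum.

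First I would plug $f = \hat f = e$ into $T_A(N,f) = A \ln(N)\,\frac{f}{\ln f}$. Since $\ln e = 1$, the factor $\frac{f}{\ln f}$ collapses to $\frac{e}{1} = e$, giving $\hat T_A(N) = T_A(N,e) = A e \ln(N)$, which is exactly the claimed identity. Second, I would note that the theorem already certifies $e$ as the global minimizer of $f \mapsto A\ln(N)\,\frac{f}{\ln f}$ over the admissible range of fan-ins, so this substituted value is genuinely the minimum, justifying the hatted notation $\hat T_A(N)$ and the word ``minimal'' in the statement.

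There is essentially no hard part here; the only point worth a remark is the mild idealization that a physical aggregation tree must have an \emph{integer} fan-in, whereas $e \approx 2.718$ is not one. Strictly, $\hat T_A(N) = A e \ln(N)$ is the value attained in the continuous relaxation and serves as a lower bound; the best integer choice ($f = 3$) yields only a slightly larger time. I would flag this in passing but not belabor it, since $\hat T_A(N)$ is precisely the quantity the optimizer uses when reasoning about the reduce phase, and the subsequent partitioning analysis builds on this closed form.
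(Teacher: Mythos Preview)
Your proposal is correct and follows exactly the implicit argument the paper intends: the corollary is stated without its own proof precisely because it is the immediate substitution $f=\hat f=e$ into $T_A(N,f)=A\ln(N)\,\frac{f}{\ln f}$, yielding $A e \ln(N)$. Your aside about the integer-fan-in idealization is a fair observation but goes beyond what the paper addresses.
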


\textbf{Intuition: } The independence of the number of inputs is easy
to see: the difference between the optimal aggregation tree for a
small vs. large number of leaf nodes is sheer scaling, a process for
which the arity of the tree does not change. The independence of the
transfer and aggregation time~$A$ is similarly intuitive, as the time
spent per aggregation tree level and the number of levels balance each
other out.

Now we consider cost-optimal aggregation trees. First, we discuss the
static case where the \ol{MapReduce} operator is not part of a
\ol{Loop}.

\begin{theorem}
  The \textbf{cost-optimal} fan-in for the reduce phase of a
  \ol{MapReduce} operator is $N$.
\end{theorem}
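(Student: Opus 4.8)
The plan is to write the aggregation \emph{cost} $C_A(N,f)$ in closed form as a function of the fan-in, show it is strictly decreasing on the admissible range $2 \le f \le N$, and conclude that it is minimized at the largest admissible fan-in $\hat f = N$ --- i.e.\ a single-level (``flat'') aggregation. Since $C(N,f) = C_A(N,f) + C_M(N)$ and the map-phase cost $C_M(N)$ does not depend on $f$, the fan-in that minimizes $C$ is exactly the one that minimizes $C_A$.

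The first and conceptually important step is to record how the cost-optimal case differs from the time-optimal case of the preceding theorem: cost is \emph{machine time}, so $C_A$ is the \emph{sum} over all aggregation nodes of the time each node is busy, not the per-level maximum that drove the $\hat f = e$ result. Under the same linear model and the same reading of $A$ (transfer plus aggregation time per object), a node that ingests $f$ objects costs $A f$ machine-time units, so $C_A(N,f)$ equals $A$ times the total number of objects transmitted into aggregation nodes anywhere in the tree. The second step is to count that quantity for a balanced $f$-ary tree with $N$ leaves, treating the tree as perfectly balanced exactly as the previous proof treats $\log_f N$ as real-valued: the tree has $\sum_{i \ge 1} N/f^i = (N-1)/(f-1)$ internal nodes, hence $N + (N-1)/(f-1) - 1 = (N-1)\,f/(f-1)$ object transmissions, giving $C_A(N,f) = A\,(N-1)\,f/(f-1)$.

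The final step is immediate: $f/(f-1) = 1 + 1/(f-1)$ is strictly decreasing in $f$, so $C_A$ shrinks as $f$ grows, and since a fan-in larger than $N$ is vacuous (there are only $N$ leaves), the optimum is $\hat f = N$ with $\hat C_A(N) = AN$ --- one aggregator consuming all $N$ inputs. I do not anticipate a real obstacle; the only points needing care are (i) that the cost is a sum over nodes rather than a parallel maximum, which is precisely what breaks the analogy with the fastest-tree analysis, and (ii) the integrality of the node counts, which I would gloss over for consistency with the preceding proof. As a sanity check I would note that the cost falls from about $2AN$ at $f = 2$ to $AN$ at $f = N$, which confirms the monotonicity claim.
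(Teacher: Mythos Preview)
Your argument is correct. The paper's own proof is a single qualitative sentence: decreasing the fan-in below $N$ introduces extra aggregation nodes (and hence extra aggregation work) without any compensating reduction in the computational cost of the reduce, so $f=N$ is optimal. You make exactly this point quantitative by deriving $C_A(N,f) = A\,(N-1)\,f/(f-1)$ and observing that it is strictly decreasing in $f$; this both confirms the paper's claim and yields the explicit endpoint values $\hat C_A(N)=AN$ at $f=N$ versus roughly $2AN$ at $f=2$. The underlying idea is identical---cost is a \emph{sum} of machine time over all aggregation nodes rather than a per-level maximum, so added tree levels are pure overhead with no offsetting benefit---but your version is more rigorous and more informative than the paper's one-line sketch.
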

\begin{proof}
  Decreasing the fan-in below $N$ introduces additional aggregation
  work and doing so does not decrease the computational cost of the
  reduce operation.
\end{proof}
Consider the case where the \ol{MapReduce} operator is part of a
\ol{Loop}: All machines used need to wait while the aggregation is
running, as it is a blocking operation.
\begin{theorem}
  The \textbf{cost-optimal} fan-in for the reduce phase of a
  \ol{MapReduce} operator inside of a \ol{Loop} is $e$.
\end{theorem}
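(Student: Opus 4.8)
The plan is to reduce this to the same optimization that appeared in the proof of Theorem~1. The essential new ingredient, relative to the static case treated just above, is that inside a \ol{Loop} the reduce phase is a \emph{blocking} barrier: the \ol{Sequential} model update and the next iteration cannot begin until the single final aggregate has been produced, and loop-aware scheduling (Section~\ref{sec:univopt}) keeps all $N$ map tasks --- together with their cached partitions --- resident on their machines throughout. Consequently all $N$ machines are allocated, and hence billed, for the entire duration $T_A(N,f)$ of the aggregation, even though most of them sit idle. This is precisely what breaks the reasoning of the static case, where idle machines could simply be relinquished, so that driving the fan-in all the way up to $N$ incurred no penalty.

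First I would write the machine-time cost of the reduce phase as $C_A(N,f) = N \cdot T_A(N,f)$, charging each of the $N$ machines for the wall-clock time of the blocking aggregation. Next I would substitute $T_A(N,f) = A f\, h(N,f) = A \ln(N)\, \frac{f}{\ln f}$, the expression established in the proof of the fan-in theorem, to obtain $C_A(N,f) = A N \ln(N)\, \frac{f}{\ln f}$. Since $N$, $A$ and $\ln(N)$ do not depend on $f$, minimizing $C_A$ over $f$ is identical to minimizing $\frac{f}{\ln f}$, which was already shown to be minimized at $f = e$. Hence $\hat f = e$, and I would close by noting that in the iterative case the cost-optimal and time-optimal fan-ins coincide, in contrast to the static case.

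The main obstacle --- in fact the only step that is not a one-line calculation --- is justifying the cost accounting $C_A(N,f) = N \cdot T_A(N,f)$. One must argue that, because of loop-aware scheduling and caching, the $N$ machines genuinely cannot be released and repurposed for the duration of the aggregation without forfeiting the very state reuse that makes iteration inexpensive, so that their idle time is a real charge rather than an artifact of the model. A secondary subtlety worth a sentence is the bookkeeping that the internal aggregator nodes are a subset of (or colocated with) the $N$ map nodes, so that no machines beyond $N$ enter the count; once this is pinned down, the optimization itself is immediate from Theorem~1.
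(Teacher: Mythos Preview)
Your approach is essentially the same as the paper's: both observe that inside a \ol{Loop} the $N$ map machines must sit idle for the full duration $T_A(N,f)$ of the blocking aggregation, so the cost is dominated by $N\cdot T_A(N,f)$ and hence minimizing cost reduces to minimizing $T_A$, which by Theorem~1 gives $\hat f=e$. The only bookkeeping difference is that you dispose of the aggregator machines by assuming they are colocated with (a subset of) the $N$ map nodes, whereas the paper treats them as potentially separate but notes that there are only $\frac{N-1}{f-1}<N$ of them, so their cost is trumped by that of the idling map machines; either accounting leads to the same one-line optimization.
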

\begin{proof}
  While the aggregation tree is running, the $N$ map machines are
  idle. The number of inner nodes in the tree is $\frac{N-1}{f-1}$,
  which means that the cost of the idling machines always trumps the
  cost of the aggregation machines. Hence, the fastest aggregation
  tree is also cost-optimal.
\end{proof}
The above establishes that neither the time nor the cost of an
iteration depend on the fan-in~$f$, as we can replace it with its
respective optimal choice of $e$ or $N$. Hence, we can refine our cost
and time model to be solely dependent on the number of machines
used~$N$:
\begin{eqnarray*}
  T(N) &=& T_A(N) + T_M(N)\\
  C(N) &=& C_A(N) + C_M(N)
\end{eqnarray*}

\subsection{Optimal Partitioning}
We use this model to study the optimal choice for $N$.  In Iterative
MapReduce, this choice is complicated by caching effects when compared
to MapReduce: Our physical plan makes sure that as much of the
training data stays available in main memory of the machines as
possible, which speeds up all but the first iteration. However, it is
neither guaranteed that all data can fit into the aggregate main
memory of a cluster, nor that that solution is optimal in terms of
response time or cost. Thus, an optimizer must consider these two
distinct possibilities: (a) the optimal $N$ is the one where all data
fits into the collective main memory, that is $R\leq MN$. (b) Some of
the data is spilled to disk, $R > MN$.

\subsubsection{Response Time Minimization}
\begin{theorem}
  Let $R\leq MN$. The \textbf{time-optimal} number of machines for the
  map phase of a MapReduce operator is:
  \begin{displaymath}
    \hat{N} = \frac{R P}{A e}
  \end{displaymath}
\end{theorem}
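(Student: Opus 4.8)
The plan is to write the per-iteration time of the \ol{MapReduce} operator as the sum of a map term and an aggregation term, each expressed explicitly as a function of $N$, and then minimize by one-variable calculus. For the aggregation term there is nothing new to do: the $N$ \ol{map} tasks produce $N$ statistics, so the aggregation tree has $N$ leaf inputs, and by the Corollary the time-optimal fan-in $\hat f = e$ already established gives $T_A(N) = A e \ln(N)$.

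The map term needs a brief modeling argument. Under the hypothesis $R \le MN$ the entire training set fits in the collective main memory of the $N$ machines, so on every iteration after the first each machine simply re-processes its resident share of roughly $R/N$ records at cost $P$ per record, and the machines operate in parallel; hence $T_M(N) = PR/N$. The load cost $D$ enters only the first iteration and, per the stated model where iteration control and warm-up are negligible against the steady-state MapReduce time, is dropped from the objective (and in any case does not affect the minimizer if one instead minimizes the $K$-iteration total $DR/N + K(PR/N + Ae\ln N)$). Combining the two terms, the quantity to minimize over $N$ is
\[
 T(N) = \frac{PR}{N} + A e \ln(N).
\]

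I would then differentiate, $T'(N) = -PR/N^{2} + Ae/N$, set $T'(N)=0$, and solve to obtain $\hat N = RP/(Ae)$. A second-order check, $T''(N) = 2PR/N^{3} - Ae/N^{2}$, evaluated at $\hat N$ equals $Ae/\hat N^{2} > 0$; since $T$ is convex on $(0,\infty)$ this critical point is the unique global minimizer. Finally I would note that a practical optimizer clips $\hat N$ to the feasible range $[1, N_{max}]$, and that because the derivation assumed $R \le MN$ it is valid in the regime $\hat N \ge R/M$; inside that regime the closed form stands.

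The main obstacle is not the calculus but pinning down the correct form of $T_M(N)$: one must justify that in the cached regime the per-record cost is the constant $P$ independent of $N$, that the $R$ records are balanced as $R/N$ across machines so the map phases run concurrently, and that exactly $N$ objects enter the aggregation tree so the Corollary applies verbatim. Once those modeling choices are made explicit, the optimization is a single line.
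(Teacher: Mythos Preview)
Your approach is essentially the paper's: write $T(N)=\dfrac{RP}{N}+Ae\ln N$ and minimize by setting the first derivative to zero, obtaining $\hat N=\dfrac{RP}{Ae}$; your additional modeling justification, second-order check, and feasibility remarks go beyond what the paper writes but do not change the route. One small slip: $T$ is \emph{not} convex on all of $(0,\infty)$ since $T''(N)=\dfrac{2RP-AeN}{N^{3}}$ changes sign at $N=2\hat N$; your local second-derivative check together with the uniqueness of the critical point and $T(N)\to\infty$ at both ends already suffices, so just drop the global convexity claim.
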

\begin{proof}
  The map phase is perfectly parallel. Hence, the total processing
  time is given by:
  \begin{displaymath}
    T(n) = \frac{R}{N}P +  A e \ln(N)
  \end{displaymath}  
  This is minimized for
  \begin{displaymath}
    \hat{N} = \argmin{N}\left(\frac{R}{N}P + A e \ln(N)\right)=\argmin{N}\left(\frac{W}{N} + \ln(N)\right)
  \end{displaymath}
  where $W=\frac{R P}{A e}$.  This is minimized when its first
  derivative $\frac{N-W}{N^2}=0$, which the case for
  $\hat{N}=W=\frac{R P}{A e}$.
  % This is indeed a minimizer, as the second derivative
  % $\frac{2W-N}{N^3}$ is positive at for $\hat{N}=W=\frac{R P}{A
  % e}>0$.
\end{proof}
\begin{theorem}
  For $R > MN$, the \textbf{time-optimal} number of machines to be
  used for a MapReduce operator is:
  \begin{displaymath}
    \hat{N}= \frac{RD+RP}{Ae}
  \end{displaymath}
\end{theorem}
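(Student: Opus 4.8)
The plan is to mirror the proof of the preceding theorem (the $R\leq MN$ case), changing only the map-phase cost to account for the fact that each machine can no longer hold its entire share of the data in memory. First I would fix attention on a steady-state iteration (any iteration after the first), which is consistent with the paper's caching model in which only the initial pass over the data pays the full load cost. With $N$ machines the data is split evenly, so each machine is responsible for $R/N$ records; since $R>MN$ is equivalent to $R/N>M$, exactly $M$ of these records are resident in the machine's cache and cost only $P$ each to map, while the remaining $R/N-M$ records must first be loaded from disk at cost $D$ and then mapped at cost $P$, for a per-record cost of $D+P$. Summing over one machine and adding the optimal aggregation time from the Corollary gives the iteration time
\begin{displaymath}
  T(N) = M P + \left(\frac{R}{N} - M\right)(D+P) + A e \ln(N).
\end{displaymath}

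Next I would simplify. Expanding the middle term, the $MP$ contributions cancel and every term that does not involve $N$ collapses into the single constant $-MD$, leaving $T(N) = \frac{R(D+P)}{N} - MD + A e \ln(N)$. Since $-MD$ is independent of $N$, it drops out of the minimization, so $\hat{N} = \argmin{N}\left(\frac{R(D+P)}{N} + A e \ln(N)\right)$ — which is exactly the objective of the previous theorem under the substitution $P\mapsto D+P$. Writing $W=\frac{R(D+P)}{Ae}$, the argument is minimized where its first derivative $\frac{N-W}{N^2}$ vanishes, i.e.\ at $\hat{N}=W=\frac{RD+RP}{Ae}$, as claimed.

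The calculus and the algebraic cancellation are routine and identical to the earlier derivation, so the only real content is the cost model in the first step: recognizing that precisely $M$ of each machine's $R/N$ records avoid the load cost, and that the leftover spilled records each incur $D+P$. I expect the one genuinely delicate point — rather than any computation — to be the regime of validity: the formula is the optimum of the ``spilled'' cost model, and a fully rigorous treatment should note (or assume, as the hypothesis $R>MN$ suggests) that this model is the operative one for values of $N$ near $\hat{N}$; were the unconstrained minimizer to land in the region $R\leq MN$, the correct answer would instead come from the previous theorem, and the true optimum would be the better of the two candidates (or the boundary $N=R/M$).
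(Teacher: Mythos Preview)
Your proposal is correct and follows essentially the same approach as the paper: you derive the per-machine map time $MP + (R/N - M)(D+P)$ whereas the paper sums total work $RP + (R-MN)D$ and divides by $N$, but both simplify to the identical expression $T_2(N) = Ae\ln(N) + \frac{R(D+P)}{N} - MD$, after which the minimization is the same substitution $P\mapsto D+P$ into the previous theorem. Your closing remark about the regime of validity (comparing against the in-memory optimum) is in fact more careful than the paper's own proof, which defers that comparison to the subsequent discussion.
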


\begin{proof}
  Processing all $R$~input records takes $RP$ time. $R-MN$ records
  need to be fetched from disk, which incurs an additional delay of
  $(R-MN)D$. The total time for one iteration is thus given by:
  \begin{displaymath}
    T_2(N) = e A \ln(N) + \frac{RD+RP}{N} -MD
  \end{displaymath}
  The constant $MD$ does not affect the minimizer $\hat{N_2}$ which is
  given, similarly to the analysis above for the case with no
  spilling, for $\hat{N_2}= \frac{RD+RP}{Ae}$
\end{proof}

Our optimizer evaluates both $T_1{\hat{N_1}}$ and $T_2\hat{N_2}$ and
chooses the lower one for the runtime plan.

The number of available machines in a cloud is essentially
unbounded. At the very least, we can assume that the number of
machines available exceeds the number of machines needed to cache all
records of a given job.  Hence, the legitimate question arises whether
such a in-memory solution can ever be slower than a solution using
secondary memory. Below, we study this question.

\begin{theorem}
  Incurring disk I/O is time-efficient, if
  \begin{displaymath}
    \frac{D}{P} \in (0, e^{1 - \frac{MP}{Ae}} - 1)
  \end{displaymath}
\end{theorem}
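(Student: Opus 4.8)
The plan is to pit the fastest ``keep everything cached'' configuration against the fastest configuration that tolerates spilling, using the optimal running times already established, and to read off the sign of the difference as a constraint on the ratio $D/P$.

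First I would instantiate the two optima. On the spill-free branch ($R \le MN$) the earlier theorem gives $\hat{N}_1 = RP/(Ae)$ and, substituting back into $T_1(N) = \frac{R}{N}P + Ae\ln(N)$, the time $T_1(\hat{N}_1) = Ae\bigl(1 + \ln\frac{RP}{Ae}\bigr)$. On the spilling branch the companion theorem gives $\hat{N}_2 = R(P+D)/(Ae)$ and, substituting into $T_2(N) = Ae\ln(N) + \frac{R(P+D)}{N} - MD$, the time $T_2(\hat{N}_2) = Ae\bigl(1 + \ln\frac{R(P+D)}{Ae}\bigr) - MD$. ``Incurring disk I/O is time-efficient'' then means that $T_2(\hat{N}_2)$ does not exceed the best time achievable while caching everything. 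Forming that difference and cancelling the terms common to both logarithms ($Ae\ln R$, $Ae\ln(Ae)$, and the leading $Ae$) collapses the comparison to a statement in which the parameters enter only through $D/P$ and the dimensionless group $MP/(Ae)$ --- i.e.\ a one-variable inequality of the shape $\ln\bigl(1 + \frac{D}{P}\bigr) \le 1 - \frac{MP}{Ae}$. Exponentiating isolates $1 + D/P$ and hands back the right endpoint $e^{1 - MP/(Ae)}$; the left endpoint $0$ is forced by $D,P > 0$. I would also point out that the interval is non-empty exactly when $MP < Ae$ --- precisely the regime in which the unconstrained spill-free optimizer would ask for fewer machines than are needed to cache the data, so that spilling is a genuine alternative exactly when the theorem says something.

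The step I expect to be the real obstacle is making the comparison legitimate, not the arithmetic that follows it. The formula $T_1(\hat{N}_1)$ is attained only when $\hat{N}_1 \ge R/M$; when $MP < Ae$ the minimizer falls below $R/M$, so the truly achievable spill-free optimum is pinned to the boundary $N = R/M$, with time $MP + Ae\ln(R/M)$. Symmetrically, $\hat{N}_2$ describes an honest spilling plan only when $\hat{N}_2 < R/M$. So before cancelling anything I would (i) check that the two branch formulas agree at $N = R/M$ (both give $MP + Ae\ln\frac{R}{M}$), so $T(N)$ is continuous across the cache threshold; (ii) use convexity of $T_1$ and of $T_2$ to place each minimizer relative to $R/M$; (iii) show, via the elementary bound $1 + \ln v < v$ for $v \ne 1$ applied with $v = M(P+D)/(Ae)$, that whenever $\hat{N}_2 < R/M$ the spilling optimum strictly beats the achievable spill-free optimum; (iv) translate the feasibility condition $\hat{N}_2 < R/M$ into a bound on $D/P$, and, should that bound emerge as $\frac{Ae}{MP} - 1$ rather than the advertised exponential, weaken it to $e^{1 - MP/(Ae)} - 1$ using $\ln c \le c - 1$ (which only shrinks the interval, hence is still a valid ``if''). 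Getting this case split and the feasibility bookkeeping right is the crux; everything downstream is routine.
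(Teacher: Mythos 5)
Your plan is correct, and in its second paragraph it is in fact more careful than the paper's own argument, while using the same quantitative ingredients (the two branch optima and the feasibility condition $R > M\hat{N}_2$). The paper simply imposes $T_2(\hat{N}_2) < T_1(\hat{N}_1)$, which gives $Ae\ln\frac{D+P}{P} < MD$, adjoins feasibility $M(D+P) < Ae$, and chains the two; read literally this derives the stated interval as a \emph{necessary} consequence of ``spilling is better and feasible,'' and it benchmarks against $T_1(\hat{N}_1)$ even though in the only regime where the interval is non-empty ($MP < Ae$) one has $\hat{N}_1 = RP/(Ae) < R/M$, so that point violates $R \le MN$ and the in-memory branch cannot actually attain $T_1(\hat{N}_1)$. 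Your steps (i)--(iv) repair exactly this: you pin the achievable in-memory optimum to the boundary value $T_1(R/M) = MP + Ae\ln(R/M)$, check continuity of the two branches there, show via $1+\ln v < v$ with $v = M(P+D)/(Ae)$ that any feasible spilling optimum ($v<1$) strictly beats that boundary value (indeed $T_2(\hat{N}_2)-T_1(R/M) = Ae(1+\ln v - v)$), and then note that the theorem's interval sits inside the feasibility interval $\bigl(0, \frac{Ae}{MP}-1\bigr)$ because $\ln c \le c-1$, which legitimately yields the ``if'' direction. Two small blemishes, neither fatal: the claim in your first paragraph that the raw difference $T_2(\hat{N}_2)-T_1(\hat{N}_1)$ already collapses to $\ln\bigl(1+\frac{D}{P}\bigr) \le 1-\frac{MP}{Ae}$ is not right --- it gives $Ae\ln\bigl(1+\frac{D}{P}\bigr) < MD$, and the feasibility constraint must still be brought in, as the paper does and as your second paragraph effectively does; and $T_1$, $T_2$ are unimodal but not globally convex in $N$, so the pinning of the constrained minimizer to $N = R/M$ should be argued from the sign of the derivative on either side of the critical point rather than from convexity.
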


\begin{proof}
  The spilling configuration is better than the in-memory
  configuration when the best time for the spilling case is better
  than the best time for the in-memory case. i.e.
  \begin{eqnarray*}
    T_2(\hat{N_2}) &<& T_1(\hat{N_1})\\
    A e \ln{\frac{R(D + P)}{A}} - MD &<& Ae \ln{\frac{RP}{A}}\\
    Ae \ln{\frac{D + P}{P}} &<& MD
  \end{eqnarray*}
  Also for spilling to be necessary we know $R > M\hat{N_2}$:
  \begin{eqnarray*}
    R &>& M\frac{R(D + P)}{Ae}\\
    MD&<& Ae - MP\\
    Ae \ln{\frac{D + P}{P}} &<& Ae - MP
  \end{eqnarray*}  
  Hence, we arrive at:
  \begin{displaymath}
    \ln{\frac{D + P}{P}} < 1 - \frac{MP}{Ae}
    \label{eq:io_fast_cond}
  \end{displaymath}
  The above inequality has solutions only when $\frac{MP}{Ae} \in (0,
  1)$. Intuitively, this means that processing all in-memory records
  in, one machine must be cheaper than the time spent by an aggregator
  in receiving all its input aggregate objects. Hence,
  Equation~\ref{eq:io_fast_cond} indicates that when
  \begin{displaymath}
    \frac{D}{P} \in (0, e^{1 - \frac{MP}{Ae}} - 1)
  \end{displaymath}
  allowing some I/O is better than using more machines to facilitate a
  completely in-memory map task.
\end{proof}

\subsubsection{Cost Minimization}
As before, we define cost as the time the iteration takes times the
number of machines used. Again, we need to consider the two cases for
whether or not all data can be held in distributed main memory
separately.

\begin{theorem}
  With $R\leq MN$ the \textbf{cost-minimizing} number of machines to
  use in a MapReduce operator is:
  \begin{displaymath}
    \hat{N_1}=\frac{R}{M} 
  \end{displaymath}
\end{theorem}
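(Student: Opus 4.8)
The plan is to reduce this to minimizing an explicit one-variable function on a feasible interval, mirroring the structure of the response-time analysis. First I would write the cost as $C(N) = N\cdot T(N)$, following the definition of machine-time given just above the statement. Since we are in the regime $R \le MN$, every record is served from cache, so by the time-optimal theorem above (together with the corollary $\hat{T}_A(N)=Ae\ln N$ on the aggregation tree, whose fan-in has already been replaced by its optimal value $e$) the iteration time is $T(N) = \frac{RP}{N} + Ae\ln(N)$. Substituting gives $C(N) = RP + Ae\,N\ln(N)$, treating $N$ as a real parameter as elsewhere in this section.

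The key observation is then that $C(N)$ is strictly increasing in $N$ on the relevant range: its derivative is $Ae(\ln N + 1)$, which is positive for all $N \ge 1$ (and since $RP$ is a constant and $N\ln N$ is increasing for $N\ge 1$, no delicate argument near $N=1$ is needed). Hence, unlike the time-optimal case, there is no interior stationary point — cost is minimized by taking $N$ as small as feasibility allows. Intuitively, the parallel speedup $R/N$ in the map term is exactly cancelled when multiplied back by $N$, leaving only the strictly growing aggregation contribution $Ae\,N\ln N$.

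Finally I would identify that lower bound. The assumption defining this case, ``all data fits into the collective main memory,'' is precisely $MN \ge R$, i.e. $N \ge R/M$. Since $C$ is increasing, the cost-minimizing choice within $\{N : R \le MN,\ N \le N_{max}\}$ is the left endpoint $\hat{N_1} = R/M$, which is the claim. (We implicitly assume $R/M \le N_{max}$, consistent with the earlier remark that a cloud provides enough machines to cache all records of a job; otherwise this case is infeasible and the spilling case of the preceding theorems applies instead.)

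The main obstacle is conceptual rather than computational: one must notice the cancellation that makes $C(N)$ monotone — so that the answer sits on the boundary of the feasible region rather than at a calculus-style optimum — and then read off that boundary constraint in the correct direction ($N \ge R/M$, not $\le$). Both points are straightforward once the cost function is written in the form $RP + Ae\,N\ln N$.
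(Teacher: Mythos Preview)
Your proposal is correct and follows essentially the same approach as the paper: derive the cost function $C_1(N)=RP+Ae\,N\ln N$, observe that it is minimized at the smallest feasible $N$, and read off the boundary $N=R/M$ from the constraint $R\le MN$. Your treatment is in fact slightly more careful than the paper's (which simply asserts the unconstrained minimizer is $N=0$), but the argument is the same.
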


\begin{proof}
  Following the discussion above, the iteration cost is given by:
  \begin{displaymath}
    C_1(N) =  e A N \ln(N) + RP
  \end{displaymath}
  Where $e A N \ln(N)$ is the cost of the optimal aggregation tree in
  the Iterative MapReduce setting.  This is minimized for
  $N=0$. However, we know that $R\leq MN$. Hence
  $\hat{N_1}=\frac{R}{M} $ is the minimizer within the domain of $N$.
\end{proof}

\begin{theorem}
  For $R< MN$ the \textbf{cost-minimizing} number of machines to use
  in a MapReduce operator is
  \begin{displaymath}
    \hat{N_2} = e^{\frac{MD}{Ae}}
  \end{displaymath}
\end{theorem}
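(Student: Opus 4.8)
The plan is to follow the template of the two preceding cost theorems: define the per-iteration \emph{cost} as the number of machines~$N$ times the per-iteration wall-clock time, then minimize over~$N$. This is the disk-spilling regime (the statement's ``$R<MN$'' should read ``$R>MN$'', as the appearance of the load cost~$D$ makes clear). For that regime the iteration time was already derived as $T_2(N)=eA\ln N+\frac{RD+RP}{N}-MD$, whose first term is the time of the optimal fan-in-$e$ aggregation tree (Corollary) and whose remaining terms charge $P$ per record for mapping all $R$ records plus $D$ per record for reloading the $R-MN$ spilled ones, all spread over $N$ parallel mappers. Multiplying through by~$N$ gives
\[
  C_2(N)=eAN\ln N+RD+RP-MDN .
\]
Here $eAN\ln N$ is exactly the Iterative-MapReduce aggregation cost that appeared in the in-memory cost theorem, $RD+RP$ is constant in~$N$, and the crucial point is that the $-MD$ term, a harmless additive constant in the time analysis, turns into the $N$-linear term $-MDN$ once we pass to cost.

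Next I would minimize $C_2$. Differentiating gives $C_2'(N)=eA\ln N+eA-MD$ and $C_2''(N)=eA/N>0$, so $C_2$ is convex on $N>0$ and its unique stationary point is the unconstrained minimizer; solving $eA(\ln N+1)=MD$ for $N$ produces $\hat{N_2}$ in the claimed exponential form in $MD/(Ae)$. It then remains to handle feasibility. Spilling occurs only when $R>MN$, i.e. $N<R/M$ (and $N\le N_{max}$, which we may treat as non-binding since the cloud is effectively unbounded), so the problem is really to minimize the convex $C_2$ over $1\le N<R/M$. If $\hat{N_2}$ lands in this interval it is the optimizer's choice; if $\hat{N_2}\ge R/M$, then $C_2$ is decreasing throughout the feasible interval and the cost keeps improving as $N$ approaches $R/M$ from below, which is exactly the seam with the no-spill regime whose optimum (previous theorem) is $\hat{N_1}=R/M$, so the two analyses agree at the boundary and the optimizer's rule of evaluating both cases and keeping the cheaper is justified; the degenerate case $\hat{N_2}<1$ simply forces $N=1$ and is treated identically.

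The calculus here is routine; the real care is in assembling $C_2(N)$ correctly. The main obstacle I expect is exactly the time-to-cost bookkeeping: one must track that the $-MD$ term does \emph{not} remain a constant but scales to $-MDN$, and that the aggregation contribution is $N$ times the \emph{per-level parallel} aggregation time (not the total aggregation work), consistently with how $eAN\ln N$ was obtained in the in-memory cost theorem. A secondary subtlety is being explicit about the feasible range $N<R/M$ and why its boundary matches the companion theorem, so that the two cost theorems genuinely dovetail into a single rule for the optimizer.
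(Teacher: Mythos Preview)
Your proposal is correct and follows essentially the same approach as the paper: form $C_2(N)=N\cdot T_2(N)=eAN\ln N - MDN + R(P+D)$, differentiate, check convexity via the second derivative, and read off the stationary point. You in fact go further than the paper by catching the $R<MN$ typo and by treating the feasibility constraint $N<R/M$ and its seam with the in-memory case, which the paper omits.
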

\begin{proof}
  The cost is given by the cost of the fastest aggregation tree plus
  the cost of the map phase:
  \begin{displaymath}
    C_2(N) = e A N \ln(N) - N M D + R(P+D)    
  \end{displaymath}
  This cost is minimized for:
  \begin{eqnarray*}
    \argmin{N} C_2(N) &=& \argmin{N} e A N \ln(N) = N M D
  \end{eqnarray*}
  The first derivative of which is zero for $\hat{N_2} =
  e^{\frac{MD}{Ae}}$. The second derivative is positive, so we indeed
  have an optimum.
\end{proof}

Our optimizer evaluates both $C_1{\hat{N_1}}$ and $C_2\hat{N_2}$ and
chooses the lower one for the runtime plan.

%%% Local Variables: 
%%% mode: latex
%%% TeX-master: "../paper"
%%% End: 

\section{Experimental Evaluation}\label{sec:experiments} 
In this section, we present our experiments that evaluate the
optimizer described in Section~\ref{sec:optimizer}.  We compare our
approach to Vowpal Wabbit (VW)~\cite{allreduce}: a state of the art
machine learning system.  Our goal here is to verify the theoretical
foundation of our optimizer as it is encoded in
Hyracks.\footnote{Hyracks is available as Open Source Software:
  \url{https://code.google.com/p/hyracks/}} We show that the
time-optimal fan-in is indeed a constant, and independent of the
aggregation time~$A$ or the number of CPUs~$N$.  We present empirical
evidence showing that our static optimizer acurately predicts the
optimal strategy.

\subsection{Task}
Before presenting the results, we first introduce the chosen task:
computing gradients for the training of a large scale linear model.
The goal of training a linear model can be formalized as:
\begin{equation} \label{eq:costfunction} \hat{w} = \argmin{\w} \sum_{\xy \in D}
  l\left(\inner{\vec{x}}{\w},y\right) \end{equation}
where $D$ is the set of tuples of data point~$\vec{x}$ and label~$y$.
The loss function~$l$ measures the empirical loss (divergence) between
the prediction~$\inner{\w}{\vec{x}}$ using the model~$\w$ and the true
label~$y$.  In many cases, it is convex and differentiable in the
prediction, and therefore in the model~$\w$.  Hence, the objective
function~\eqref{eq:costfunction} is amenable to convex optimization.
More precisely, the objective function can be minimized using gradient
descent methods.  Such methods, at their core, perform iterative steps
of the following form:
\begin{equation} \label{eq:update} \w_{t+1} = \w_t - \eta \sum_{(x,y)\in D}
  \delta_w l\left(\inner{\vec{x}}{\w_t},y\right) \end{equation}
Here, $\delta_w$ denotes the gradient with respect to the model~$\w$
and $\eta$ the step size.  The dominant cost in this is computing the
gradients, which decomposes per tuple~$\xy$.  Hence, this task is
amenable to MapReduce and the overall procedure to Iterative
MapReduce.

\textbf{Data Set:} All experiments reported here were performed on a
real-world dataset drawn from the advertisement domain.  The data
consists of 2,319,592,301 records whose feature vectors~$y$ are
sparse, containing a total of 37,113,474,662 non-zero features.  A
textual representation of the data set in the format used by VW (see
below) is 492\,GB in size.

\begin{table}
  \begin{center}
    \begin{tabular}{l l r}
      \small
      Symbol & Meaning & Value \\
      \hline
      $R$      & total \# records & 2,319,592,301 \\
      $N_{max}$ & Max \# map tasks      & 120 \\
      \hline
      $M$ & \# records cached per task & 19,329,936 \\
      \hline
      $P$ & Map time per record        & $3.895 \times 10^{-6}$ s\\
      $D$ & Load time per record       & $w \times 10^{-6}$ s\\
      $A$ & Aggregation time per object    & 2.1 s\\
      \hline
    \end{tabular}
  \end{center}
  \caption{Characteristics of the evaluated environment}
  \label{table:optimizer-input}
\end{table}

\textbf{Cluster: } All experiments were conducted on a single rack of
30 machines in a Yahoo! Research Cluster. Each machine has 2 quad-core
Intel Xeon E5420 processors, 16GB RAM, 1Gbps network interface card,
and four 750GB drives configured as a JBOD, and runs RHEL 5.6. Thus,
each machine can support 4 map tasks, leavings us with $N_{max}=120$.
The machines are connected to a top of rack Cisco 4948E switch.  The
connectivity between any pair of nodes in the cluster is 1Gbps.
Table~\ref{table:optimizer-input} shows the statistics of the dataset
and task which we measured and use as input for our optimizer.

\subsection{Grounding Experiment}
We begin with an experiment that compares our optimized plan, executed
in the Hyracks runtime system, to Vowpal Wabbit (VW)~\cite{allreduce}.
VW uses Hadoop to schedule a map only job.  Each of these map tasks
then downloads the textual data assigned to them from HDFS to the
local disk in an optimized binary format.  The CPUs span a binary
aggregation tree for the reduce operation.  Each CPU emits one result
into the aggregation tree, effectively pre-aggregating the per-CPU
results.  VW is the first system to achieve terra-data scale: It can
operate on datasets with ``trillions of features, billions of
training examples and millions of parameters.''~\cite{allreduce}.  

On the complete dataset using gradients of 128MB ($2^{24}$
dimensions), the average iteration time of VW is 124.41s when run on
all 120~CPUs machines of the cluster.  The average iteration time for
Hyracks in the same configuration (using a binary aggregation tree) is
127.42s.  We performed an additional experiment using a fan-in of 4
and per machine pre-aggregation, which resulted in a average iteration
time of 114.54s.  Hence, our optimized plan beats the current state of
the art for this task.

% Note that VW performs additional optimizations such as pipelining
% within the aggregation tree which, while possible for the vector
% summation case, are illegal in our general programming model.
% However, this optimization possibility could indeed be uncovered by
% a high level language such as ScalOps~\cite{Weimer:2011fk}, which in
% turn wound allow Hyracks to perform the same optimization.

Our optimizer suggests the use of more CPUs than available to us
(1500) for the given dataset size.  Interestingly, and not by clever
experimental design, it also predicts $N=120$ to be the cost
minimizing configuration for which a cost of $13,700$~CPU seconds is
predicted.  We in fact measure $15,000$, which is remarkably close
given that our optimizer assumes the optimal fan-in of $e$ and not $2$
as used here for comparison with VW.

\subsection{Constant Fan-In}
Our theoretical analysis suggests that the optimal fan-in of an
aggregation tree is independent of both the number of leaf nodes~$N$
and the transfer and processing time per object~$A$.  To evaluate this
claim, we constructed trees with varying fan-in over different numbers
of leaf nodes aggregating different vector sizes.  In
Table~\ref{table:fan-in}, we report the minimum-time fan-in found for
each combination.  The results show the minimum fan-in is constant at
either $4$ or $5$ in the vast majority of cases.  Thus, the
theoretical prediction that the fan-in is a constant, which we have
empirically verified.  However, the empirically found optimum differs
from the theoretical prediction $e$.  We attribute this deviation to
effects not modeled in our theory.  To be precise, the addition of an
aggregation node adds a one-time (setup) cost to the system, which is
amortized via the higher fan-in empirically.

\begin{table}
  \begin{center}
    \begin{tabular}{r c c c c c }
      size/N &2 & 4 & 8 & 16 & 32 \\
      \hline
      1\,MB   & 8 & 5 & 4 & 5  & 4 \\
      2\,MB   & 5 & 3 & 5 & 5  & 5 \\
      4\,MB   & 5 & 5 & 4 & 4  & 4 \\
      8\,MB   & 5 & 4 & 5 & 5  & 3 \\
      16\,MB  & 5 & 4 & 5 & 5  & 5 \\
      32\,MB  & 5 & 5 & 5 & 5  & 3 \\
      64\,MB  & 4 & 4 & 5 & 5  & 5 \\
      128\,MB & 8 & 3 & 5 & 5  & 5 \\
    \end{tabular}
  \end{center}
  \caption{Optimal fan-in for combinations of vector size and number
    of leaf nodes.}
  \label{table:fan-in}
\end{table}

\subsection{Optimal Partitioning}
We now evaluate the other theoretical result presented earlier: a
prescription for the optimal number of machines to use for a given
job.  To create this scenario, we use only $1/5$ of our total dataset,
containing $463,925,403$ records.  This amount of data (roughly 100GB
in text form) can fit in the main memory of a subset of our 120~CPUs.
For the characteristics of our cluster as reported in
Table~\ref{table:optimizer-input}, our optimizer picks $N=N_{max}=120$
to minimize response time and $N=24$ to minimize cost.

Figure~\ref{fig:iteration-time} shows the average iteration times and
costs over this dataset for different numbers of CPUs.  All
experiments use a fan-in of $4$, as determined by the prior
experiment.  The results show that the response time is indeed
minimized for $N=120$, as predicted by our optimizer.  Furthermore,
$N=24$ is the cost minimizing configuration for this job, again as
predicted.

\pgfplotsset{
    width=\columnwidth,
    legend style={
        at={(0.03, 0.97)},
        anchor=north west
    },
}

\begin{figure}
  \centering
    % see http://tex.stackexchange.com/questions/42697/
    % and http://tex.stackexchange.com/questions/11159/
    %\begin{tikzpicture}[
    %]
    %\pgfplotsset{width=7cm}
    % Time axis
    %\begin{axis}[
        %axis y line*=left,
        %style=blue!75!black,
        %ybar,
        %ylabel=Iteration time (seconds),
        %bar width=10,
        %bar shift=-5,
        %axis x line=none,
        %xticklabels={20, 24, 40, 60, 80, 120},
    %]
        % Iteration time
        %\addplot+[blue] coordinates
        %{
            %(1, 163.55)
            %(2, 102.052)
            %(3, 73.7388)
            %(4, 63.7827)
            %(5, 58.6288)
            %(6, 51.2064)
        %};
    %\end{axis}
    %\begin{axis}[
        %axis y line=none,
        %xticklabels={20, 24, 40, 60, 80, 120},
        %xlabel=Map slots,
    %]
    %\end{axis}
    %\begin{axis}[
        %xticklabels={20, 24, 40, 60, 80, 120},
        %y axis  line style={-},
        %axis y line*=right,
        %axis x line=none,
        %ybar,
        %ylabel=Iteration time (map-slot-seconds),
        %yticklabel pos=right,
        %right,
        %style=red!75!black,
        %bar width=10,
        %bar shift=6,
        %ymax=6500,
        %ylabel style={xshift=-75pt, yshift=-8cm},
    %]
        % Iteration cost
        %\addplot+[red] coordinates
        %{
            %(1, 3271)
            %(2, 2449.248)
            %(3, 2949.552)
            %(4, 3826.962)
            %(5, 4690.304)
            %(6, 6144.768)
        %};
    %\end{axis}
    %\end{tikzpicture}
   \includegraphics{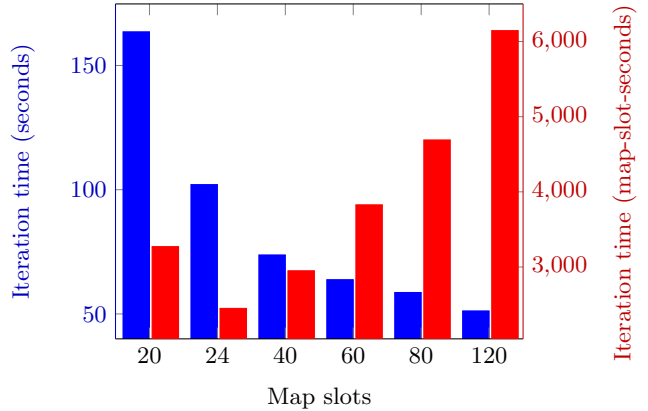}
  \caption{Iteration time and cost using different numbers of CPUs}
  \label{fig:iteration-time}
\end{figure}

\subsection{Discussion}

Our runtime and optimizer is competitive with the current state of the
art in large scale machine learning systems.  This is especially
noteworthy as it makes fewer assumptions than competing systems: It
neither assumes enough resources to cache all data (like Spark), nor
does it default to read all data from disk (like Hadoop).
Additionally, all experimental findings were consistent with the
theoretical findings presented above.  In summary, our static
optimizer was able to pick a good plan in all combinations we tested.

%%% Local Variables: 
%%% mode: latex
%%% TeX-master: "../paper"
%%% End: 

\section{Conclusions}\label{sec:conclusion} MapReduce does not support iteration, which is important for machine
learning tasks that are being increasingly carried out on Big Data in
large-scale ``cloud'' cluster environments.  In this paper, we argued
that the right way to support iteration is to fundamentally extend the
MapReduce model with a looping construct, thereby allowing the system
to reason about the entire program execution.  We presented such an
extension, called Iterative MapReduce.  To illustrate the power of
automatic database-style optimization, we considered a class of
Iterative MapReduce programs that can readily express many ML tasks,
and developed an optimizer that automatically instantiates an
efficient execution plan, taking into account a broad range of
optimizations including data-local and loop-aware scheduling, data
caching, serialization costs, intelligent data partitioning and
resource allocation, and auto-configuration of the aggregation-tree
for the reduce phase.  We presented theoretical justifications for the
two key decisions made by the optimizer on a per-program basis, namely
data partitioning/resource allocation and aggregation-tree
configuration, and presented empirical results that demonstrate our
plans to be competitive with a specialized state-of-the-art
implementation.

Much remains to be done.  The optimizer must be extended to cover the
full range of Iterative MapReduce programs, and to take into account
the likelihood of different kinds of failures in a cost-based manner.
A more comprehensive evaluation must be carried out to establish that
optimizers can indeed be competitive with specialized state-of-the-art
implementations for diverse ML problems.  Nonetheless, our results are
extremely encouraging in that they offer the promise of efficient
system-driven optimization for a broad class of ML programs.  This is
especially significant given that programmers cannot effectively tune
their programs in cloud systems with rapidly changing resource
availability (thanks to multi-tenancy, elasticity, and input datasets
that can change significantly across different runs of the same
program).  We believe that automatic system-driven program
optimization along the lines pioneered by database query optimizers is
the only feasible avenue for future cloud systems, and the results in
this paper are a first step in this direction.

\nocite{Yingyi-Bu:2010fk}
\bibliographystyle{abbrv} \small 
\bibliography{paper}
\end{document}